\newcommand{\pset}{\mathbb{N}^{P}}
\newcommand{\petri}{\mathcal{N}}
\newcommand{\petrituple}{\mathcal{N} = (P,T,F,\init)}
\newcommand{\init}{m_{init}}
\newcommand{\final}{m_{final}}
\newcommand{\upa}[1]{{\uparrow}{#1}}
\newcommand{\downa}[1]{{\downarrow}{#1}}
\newcommand{\funpre}[1][]{pre#1_{\petri}}
\newcommand{\funcpre}[1][]{cpre#1_{\petri}}
\newcommand{\pre}[2][]{\funpre#1(#2)}
\newcommand{\cpre}[2][]{\funcpre#1(#2)}
\newcommand{\prestar}[1]{\pre[^*]{#1}}
\newcommand{\prestarfinal}{\prestar{\upa{\final}}}
\newcommand{\Cov}{Cov_{\petri}}
\newcommand{\transition}[1]{\xrightarrow{#1}}
\newcommand{\transitionstar}{\transition{*}}
\newcommand{\transitionwithoutarg}{\rightarrow}
\newcommand{\Uk}{(U_k)_k}
\newcommand{\mytt}[1]{{\tt #1}}
\newcommand{\tool}[1]{\mytt{#1}}
\newcommand{\qcover}{\tool{QCover}}
\newcommand{\icover}{\tool{ICover}}
\newcommand{\preqcover}{\tool{QCover/Pp}}
\DeclareMathOperator{\Min}{Min}
\newcommand{\setN}{\mathbb{N}}
\newcommand{\setZ}{\mathbb{Z}}
\newcommand{\setQ}{\mathbb{Q}}
\newcommand{\moins}{\backslash}
\title{Occam's Razor Applied to the Petri Net Coverability Problem}
\author{%
  Thomas Geffroy
  \and
  Jérôme Leroux
  \and
  Grégoire Sutre
}
\institute{%
  Univ. Bordeaux \& CNRS, LaBRI, UMR 5800, Talence, France
}
\begin{document}

\maketitle

\pagestyle{plain}
\thispagestyle{plain}

\begin{abstract}
  The verification of safety properties for concurrent systems
often reduces to the \emph{coverability} problem for Petri nets.
This problem was shown to be \textsc{ExpSpace}-complete forty years ago.
Driven by the concurrency revolution,
it has regained a lot of interest over the last decade.
In this paper,
we propose a generic and simple approach to solve this problem.
Our method is inspired from the recent approach of
Blondin, Finkel, Haase and Haddad~\cite{Blondinetal16}.
Basically,
we combine forward invariant generation techniques for Petri nets
with backward reachability for well-structured transition systems.
An experimental evaluation demonstrates the efficiency of our approach.

\end{abstract}

\section{Introduction}

\paragraph{Context.}

The analysis of concurrent systems with unboundedly many processes
classically uses the so-called \emph{counter abstraction}~\cite{GS92}.
The main idea is to forget about the identity of each process,
so as to make processes indistinguishable.
Assuming that each process is modeled by a finite-state automaton,
it is then enough to count,
for each state $q$,
how many processes are in state $q$.
The resulting model is a Petri net,
with no a priori bound on the number of tokens.
The verification of a safety property on the original concurrent system
(e.g., mutual exclusion)
translates into a \emph{coverability} question on the Petri net:
Is it possible to reach a marking that is component-wise larger than a given marking?

\paragraph{Related work.}

Karp and Miller~\cite{KarpM69} proved in 1969 that
coverability is decidable (but their algorithm is not primitive recursive),
Lipton showed that it requires at least exponential space~\cite{Lipton76},
and Rackoff showed that it only requires exponential space~\cite{Rackoff78}.
Despite these somewhat negative results,
and driven by the concurrency revolution,
the coverability problem has regained a lot of interest over the last decade.
Recent efficient approaches include target set widening~\cite{KaiserKW14}
and structural analysis mixed with SMT solving~\cite{EsparzaLMMN14,Blondinetal16}.
We believe that the time is ripe to experiment with new ideas and prototypes
for coverability,
and to apply them to real-world concurrent systems.

\smallskip

Our work builds notably on~\cite{Blondinetal16},
which proposes a new approach to the coverability problem and its
implementation.
The approach of~\cite{Blondinetal16} is conceptually simple and exploits recent
advances in the theory of Petri nets as well as the power of modern
SMT-solvers.
In a nutshell, they leverage recent results on coverability in
continuous Petri nets~\cite{FracaH15} to over-approximate coverability under the
standard semantics: any configuration that is not coverable in a
continuous Petri net is also not coverable under the standard
semantics. This observation is then exploited inside a
backward-coverability framework~\cite{ACJT00}.

\paragraph{Our contribution.}

We present a generic backward coverability algorithm that
relies on downward-closed (forward) invariants to prune the
exploration of the state space.
Our algorithm is in fact a family of algorithms parametrized
by downward-closed invariants.
It generalizes the algorithm presented in~\cite{Blondinetal16} and
implemented in the promising tool $\qcover$.
We implemented our algorithm as a variant of $\qcover$
that we call $\icover$.
Whereas $\qcover$ is based on invariants obtained from recent results
on continuous Petri nets~\cite{FracaH15}, our
tool $\icover$ is based on two classical methods:
the \emph{state equation} for Petri nets,
and data-flow \emph{sign analysis}~\cite{Cousot:1977:POPL}.
On the 143 Petri net coverability questions that
$\qcover$ solved,
the tool $\qcover$ took 10318 seconds,
while $\icover$ used only 5517 seconds.

\paragraph{Outline.}

\cref{sec:petri-coverability} recalls the Petri net coverability
problem.
\cref{sec:backward-pruning,sec:icover-algorithm} present
our backward coverability algorithm with pruning based on
downward-closed invariants.
% Practical implementation issues of that algorithm
% are discussed in \cref{sec:icover-algorithm}.
In \cref{sec:stateeq,sec:mark},
we recall two classical methods for computing invariants,
namely the state (in-)equation and sign analysis.
\cref{sec:bench} is dedicated to the experimental evaluation
of the tool \icover.
In \cref{sec:comparison},
we provide mathematical foundations for explaining our empirical good
results based on the notion of limit-reachability in continuous Petri
nets~\cite{DBLP:conf/apn/RecaldeTS99}.

\section{The Coverability Problem for Petri nets}
\label{sec:petri-coverability}

A Petri net is a tuple $\petrituple$ comprising a finite set of \emph{places} $P$, a finite set of \emph{transitions} $T$ disjoint of $P$, a \emph{flow} function $F$ from $(P \times T) \cup (T \times P)$ to~$\setN$, and an \emph{initial} marking $\init \in \pset$.
It is understood that $\pset$ denotes the set of total maps from $P$ to $\setN$.
Elements of $\pset$ are called \emph{markings}.
Intuitively,
a marking specifies how many \emph{tokens} are in each place of the net.
Tokens are consumed and produced through the firing of transitions.
%
%For a Petri Net with $d \in \setN$ places and $n \in \setN$ transitions we will note $P = \{p_{1}, \ldots,p_{d}\}$ and $T = \{t_{1}, \ldots,t_{n}\}$.
%For a Petri Net we will note $d \in \setN$ places and $n \in \setN$ transitions we will note $P = \{p_{1}, \ldots,p_{d}\}$ and $T = \{t_{1}, \ldots,t_{n}\}$.
%For now on $P = \{p_{1}, \ldots, p_{d}\}$ with $d \in \setN$.
%A marking $m$ is a vector in $\setN^{P}$.
%
A transition $t \in T$ may fire only if it is enabled,
meaning that each place $p$ contains at least $F(p, t)$ tokens.
Firing an enabled transition $t$ modifies the contents of each place $p$ by
first removing $F(p, t)$ tokens and then adding $F(t, p)$ tokens.
To clarify this intuitive description of the Petri net semantics,
we introduce, for each transition $t \in T$,
the $t$-step binary relation $\transition{t}$ over $\pset$,
defined by
$$
m \transition{t} m'
\ \Leftrightarrow \ 
\forall p \in P : m(p) \geq F(p, t) \wedge m'(p) = m(p) - F(p, t) + F(t, p)
$$
The one-step binary relation $\transitionwithoutarg$ is the union of these $t$-step relations.
Formally,
$m \transitionwithoutarg m' \Leftrightarrow \exists t \in T : m \transition{t} m'$.
The many-step binary relation $\transitionstar$ is the reflexive-transitive closure of $\transitionwithoutarg$.

\begin{example}
  \cref{figure:petri} depicts a simple Petri net $\petrituple$ with
  places $P = \{p_{1}, p_{2}, p_{3}\}$,
  transitions $T = \{t_{1}, t_{2}, t_{3}\}$ and
  flow function $F$ such that
  $F(p_{1},t_{1}) = 1$,
  $F(p_{2},t_{2}) = 1$,
  $F(p_{3},t_{3}) = 1$,
  $F(t_{1},p_{2}) = 1$,
  $F(t_{2},p_{3}) = 2$,
  $F(t_{3},p_{2}) = 2$, and
  $F(p, t) = F(t, p) = 0$ for all other cases.
  The initial marking is $\init = (1,0,0)$.
  The sequence of transitions $t_1 t_2 t_3$ may fire from the initial marking.
  Indeed,
  $(1,0,0) \transition{t_{1}} (0,1,0) \transition{t_{2}} (0,0,2) \transition{t_{3}} (0,2,1)$.
\end{example}

%A Petri Net System is a tuple $S = (P,T,F,m_{0})$ where $(P,T,F)$ is a Petri Net and $m_{0}$ is the \emph{initial marking} such that $m_{0} \in  \setN^{P}$.

\begin{figure}[t]
  \centering
\begin{tikzpicture}[node distance=1.5cm,auto]

  \tikzstyle{place}=[circle,thick,draw=blue!75,fill=blue!20,minimum size=6mm]
  \tikzstyle{red place}=[place,draw=black!75,fill=black!20]
  \tikzstyle{transition}=[rectangle,thick,draw=black!75,
  			  fill=black!20,minimum size=4mm]

  \tikzstyle{every label}=[red]

  \begin{scope}
    \node [place,tokens=1,label=above:$p_{1}$]  (p1)                                    {};
    \node [transition]                          (t1)  [right of=p1, xshift=5mm]         {$t_{1}$};
    \node [place,label=above left:$p_{2}$]      (p2)  [right of=t1, xshift=5mm]         {};
    \node [transition]                          (t2)  [above right of=p2, xshift=8mm]   {$t_{2}$};
    \node [transition]                          (t3)  [below right of=p2, xshift=8mm]   {$t_{3}$};
    \node [place,label=above right:$p_{3}$]     (p3)  [below right of=t2, xshift=8mm]   {};

    \path (t1)
    edge [pre, left]    (p1)
    edge [post, right]  (p2);

    \path (t2)
    edge [pre,  bend right]          (p2)
    edge [post, bend left]  node {2} (p3);

    \path (t3)
    edge [pre,bend right]           (p3)
    edge [post, bend left] node {2} (p2);
  \end{scope}

\end{tikzpicture}
\caption{Simple Petri net example}
  \label{figure:petri}
\end{figure}
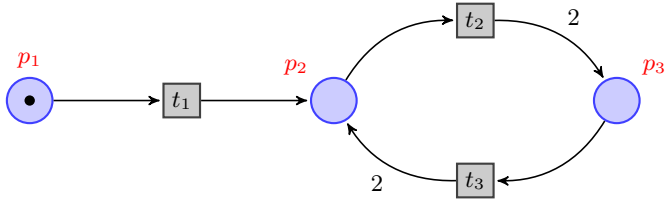

One of the most fundamental verification questions on Petri nets is coverability.
In its simplest form,
the coverability problem asks whether it is possible,
by firing a sequence of transitions,
to put a token in a given place.
In essence,
the coverability problem for Petri nets corresponds to the
control-state reachability problem for other models of computation,
such as counter machines,
which are equipped with control states.
The formal definition of coverability relies on a partial order over markings,
defined hereafter.

\smallskip

Let $\leq$ denote the usual total order on $\setN$.
We extend $\leq$ over $\pset$ component-wise,
by $m \leq m' \Leftrightarrow \forall p \in P : m(p) \leq m'(p)$.
This extension is a partial order over $\pset$.
Given two markings $m$ and $m'$ in $\pset$,
we say that $m$ \emph{covers} $m'$ when $m \geq m'$.
The \emph{coverability problem} asks,
given a Petri net $\petrituple$ and a \emph{target} marking $\final \in \pset$,
whether there exists a marking $m \in \setN^{P}$ such that
$\init \transitionstar m$ and $m \geq \final$.
The main goal of this paper is to provide a simple,
yet efficient procedure for solving this problem.
Our method is inspired from the recent approach of~\cite{Blondinetal16}.
Basically,
we combine forward invariant generation techniques for Petri nets
with backward reachability for well-structured transition systems~\cite{ACJT00,DBLP:journals/tcs/FinkelS01}.
Before delving into the details,
we need some additional notations.

%We define $\pre{t,m} = \{ m' \in M \mid m' \transition{t} m \}$ (resp. $\post{t,m} = \{ m' \in M \mid m \transition{t} m' \}$) the \emph{predecessor} (resp. \emph{successor}) of a marking $m \in M$ via a transition $t \in T$. We generalize for a set of markings $S \subseteq M$: $\pre{t,S} = \bigcup_{m \in S} \pre{t,S}$ (resp. $\post{t,S} = \bigcup_{m \in S} \post{t,m}$) and for all transitions: $\pre{S} = \bigcup_{t \in T} \pre{t,S}$ (resp. $\post{S} = \bigcup_{t \in T} \post{t,S}$).

\smallskip

For a transition $t \in T$ and a set $S \subseteq \pset$ of markings,
we let $\pre[^t]{S}$ denote the predecessors of $S$ via the transition $t$.
Similarly,
$\pre{S}$ and $\pre[^*]{S}$ denote the one-step and many-step predecessors of $S$,
respectively.
Formally,
the functions $\funpre[^t]$, $\funpre$ and $\funpre[^*]$
from $2^{\pset}$ to $2^{\pset}$ are defined by
$$
\begin{array}{r@{\ \ }c@{\ \ }l}
  \pre[^t]{S}  & = & \{m \in \pset \mid \exists m' \in S : m \transition{t} m'\}\\
  \pre{S}     & = & \{m \in \pset \mid \exists m' \in S : m \transitionwithoutarg m'\}\\
  \pre[^*]{S} & = & \{m \in \pset \mid \exists m' \in S : m \transitionstar m'\}
\end{array}
$$

Given a subset $S \subseteq \pset$ of markings,
we let $\upa{S}$ and $\downa{S}$ denote
its \emph{upward closure} and \emph{downward closure}, respectively.
These are defined by
$$
\begin{array}{r@{\ \ }c@{\ \ }l}
\upa{S}   & = & \{u \in \pset \mid \exists m \in S : u \geq m\}\\
\downa{S} & = & \{d \in \pset \mid \exists m \in S : d \leq m\}
\end{array}
$$
A subset $S \subseteq \pset$ is called \emph{upward-closed} when $S = \upa{S}$,
and it is called \emph{downward-closed} when $S = \downa{S}$.

\begin{notation}
  For the remainder of the paper,
  to avoid clutter,
  we will simply write $m$ in place of $\{m\}$ for singletons,
  when this causes no confusion.
\end{notation}

Recall that the coverability problem asks whether
$\init \transitionstar m \geq \final$ for some marking $m \in \setN^{P}$.
This problem is equivalently phrased as the question whether
$\init$ belongs to $\pre[^*]{\upa{\final}}$.
This formulation can be seen as a backward analysis question.
We may also phrase the coverability problem in terms of a forward analysis
question, using the notion of coverability set.

\smallskip

Given a Petri net $\petrituple$,
the \emph{coverability set} of $\petri$ is the set
$\Cov = \downa{\{m \in \pset \mid \init \transitionstar m\}}$.
It is readily seen that the coverability problem is equivalent to
the question whether $\final$ belongs to $\Cov$.
We are now equipped with the necessary notions to present our
mixed forward/backward approach for the coverability problem.

\section{Backward Coverability Analysis with Pruning}
\label{sec:backward-pruning}
\label{sec:pruning}

We now present our method to solve the coverability problem for Petri nets.
This section gives the mathematical foundations of our approach,
with no regard for implementability.
We will focus on the implementation of this approach in \cref{sec:icover-algorithm}.

\smallskip

The classical backward reachability approach for the coverability
problem~\cite{ACJT00,DBLP:journals/tcs/FinkelS01}
consists in computing a growing sequence $U_0 \subseteq U_1 \subseteq \cdots$
of upward-closed subsets of $\pset$ that converges to
$\prestarfinal$.
Here,
we modify this growing sequence in order to leverage an a priori known
over-approximation of the coverability set.
In practice,
this means that we narrow the backward reachability search by
pruning some markings that are known to be not coverable.

\smallskip

An \emph{invariant} for a Petri net $\petrituple$ is any subset $I \subseteq \pset$
that contains every reachable marking, i.e.,
every marking $m$ with $\init \transitionstar m$.
Observe that a downward-closed subset of $\pset$ is an invariant of $\petri$
if, and only if,
it contains $\Cov$.
\cref{sec:stateeq,sec:mark} will discuss the automatic generation of
downward-closed invariants.

\smallskip

For the remainder of this section,
we consider a Petri net $\petrituple$ and
we assume that we are given a downward-closed invariant $I$ for $\petri$.
We introduce the sequence $U_0, U_1, \ldots$ of subsets of $\pset$
defined as follows:
$$
\begin{array}{r@{\ \ }c@{\ \ }l}
  U_{0}   & = & \upa{(\final \cap I)} \\
  U_{k+1} & = & \upa{(\pre{U_k} \cap I)} \cup U_k
\end{array}
$$
Observe that each $U_k$ is upward-closed and that
the sequence $\Uk$ is growing for inclusion.
On the contrary to the classical backward reachability
approach~\cite{ACJT00,DBLP:journals/tcs/FinkelS01},
$U_{k+1}$ does not consider all one-step predecessors of $U_k$,
but discards those that are not in $I$.
Note that by taking $I = \pset$,
which is trivially a downward-closed invariant,
we obtain the same growing sequence as in the classical backward reachability
approach~\cite{ACJT00,DBLP:journals/tcs/FinkelS01}.
The two following lemmas show that we can use the sequence $\Uk$
to solve the coverability problem.

\begin{restatable}{lemma}{lemukstationary}
  \label{lem:uk-stationary}
  The sequence $\Uk$ is ultimately stationary.
\end{restatable}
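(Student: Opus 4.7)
The plan is to invoke the well-quasi-ordering of $(\pset, \leq)$, which is Dickson's lemma: every upward-closed subset of $\pset$ has only finitely many minimal elements, and any increasing chain of upward-closed sets stabilises.

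First, I would verify by a trivial induction that every $U_k$ is upward-closed: $U_0$ is upward-closed by construction, and if $U_k$ is upward-closed then so is $U_{k+1} = \upa{(\pre{U_k} \cap I)} \cup U_k$, since the union of two upward-closed sets is upward-closed (the first term is explicitly an upward closure). I would also note the trivial monotonicity $U_k \subseteq U_{k+1}$, which is immediate from the definition of $U_{k+1}$ as a union with $U_k$.

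Next, I would set $V = \bigcup_{k \in \setN} U_k$. As a union of upward-closed sets, $V$ is upward-closed. By Dickson's lemma applied to $(\pset, \leq)$, the antichain $\Min(V)$ of minimal elements of $V$ is finite, and $V = \upa{\Min(V)}$. For each $m \in \Min(V)$, pick an index $k_m$ with $m \in U_{k_m}$, and let $N = \max \{k_m \mid m \in \Min(V)\}$, which is well-defined because the maximum is taken over a finite set. By monotonicity of $\Uk$, every element of $\Min(V)$ lies in $U_N$, and since $U_N$ is upward-closed, $V = \upa{\Min(V)} \subseteq U_N \subseteq V$, giving $U_N = V$. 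Consequently $U_k = V$ for every $k \geq N$, so the sequence is ultimately stationary.

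There is no real obstacle here; the only point worth double-checking is that the pruning by intersection with $I$ does not break upward-closedness of $U_{k+1}$, which is why the outermost $\upa{(\cdot)}$ in the recurrence is essential. Everything else is a textbook application of Dickson's lemma.
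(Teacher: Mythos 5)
Your proof is correct and follows essentially the same route as the paper: both reduce the claim to the facts that each $U_k$ is upward-closed and the sequence is growing, and then invoke the stabilisation of growing chains of upward-closed sets over the well-quasi-order $(\pset,\leq)$ given by Dickson's Lemma. The only difference is that the paper cites this stabilisation property as a known lemma, whereas you prove it inline via the finiteness of $\Min(V)$ for $V=\bigcup_k U_k$; that argument is sound.
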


\vspace{-1em} %
\begin{restatable}{lemma}{lemukcov}
  \label{lem:uk-cov}
  It holds that
  $\final \in \Cov$ if, and only if, $\init \in \bigcup_k U_k$.
\end{restatable}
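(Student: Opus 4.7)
The plan is to prove both directions by relating the sequence $\Uk$ to the set $\prestarfinal$, using the standard reformulation $\final \in \Cov \Leftrightarrow \init \in \prestarfinal$.

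For the ``if'' direction, I would show by induction on $k$ that $U_k \subseteq \prestarfinal$. The base case follows from $U_0 \subseteq \upa{\final} \subseteq \prestarfinal$. For the inductive step, the inclusion $U_k \subseteq \prestarfinal$ yields $\pre{U_k} \subseteq \prestarfinal$ since $\prestarfinal$ is closed under one-step predecessors; intersecting with $I$ and then taking the upward closure still lands in $\prestarfinal$, provided $\prestarfinal$ is itself upward-closed. This last property is the standard Petri-net monotonicity: whenever $m \transition{t} m'$ and $m \leq n$, then $n \transition{t} n'$ with $n' = m' + (n - m) \geq m'$, so firing the same transition from a larger marking stays larger. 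Once $U_k \subseteq \prestarfinal$ is established for every $k$, taking $\init \in \bigcup_k U_k$ immediately yields $\init \in \prestarfinal$, i.e., $\final \in \Cov$.

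For the ``only if'' direction, I would fix a witness firing sequence $\init = m_0 \transition{t_1} m_1 \cdots \transition{t_n} m_n$ with $m_n \geq \final$ and prove by induction on $i$ that $m_{n-i} \in U_i$. Every $m_i$ is reachable, hence lies in $I$; moreover $\final$ itself lies in $I$ because $I$ is downward-closed and contains the reachable marking $m_n \geq \final$. This gives $U_0 = \upa{\final}$, so $m_n \in U_0$, settling the base case. For the inductive step, $m_{n-i-1} \transitionwithoutarg m_{n-i} \in U_i$ places $m_{n-i-1}$ in $\pre{U_i}$; combined with $m_{n-i-1} \in I$, this gives $m_{n-i-1} \in \pre{U_i} \cap I \subseteq U_{i+1}$. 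Taking $i = n$ yields $\init \in U_n \subseteq \bigcup_k U_k$.

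The only subtle point, to my eyes, is treating correctly the outer upward closure in the recurrence $U_{k+1} = \upa{(\pre{U_k} \cap I)} \cup U_k$. It is harmless in the ``only if'' direction because the relevant witnesses $m_{n-i-1}$ are genuine members of $\pre{U_i} \cap I$ rather than mere dominators, and it is exactly accommodated in the ``if'' direction by the upward-closedness of $\prestarfinal$, which is precisely where the Petri-net monotonicity is invoked. The downward-closed invariant hypothesis on $I$ enters at two matching points: to conclude that every reachable marking lies in $I$, and to conclude that $\final \in I$ whenever $\final \in \Cov$.
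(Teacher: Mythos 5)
Your proof is correct and follows essentially the same route as the paper's: the forward direction tracks a witness firing sequence through the $U_i$ using the invariance and downward-closedness of $I$, and the converse shows $U_k \subseteq \prestarfinal$ by induction. The only cosmetic difference is that you absorb the outer upward closure via the upward-closedness of $\prestarfinal$, whereas the paper uses the equivalent fact that $\pre{U_k}$ is already upward-closed; both rest on the same Petri-net monotonicity.
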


We have presented in this section a growing sequence of upward-closed subsets
of markings that is ultimately stationary and whose limit contains enough
information to solve the coverability problem.
Our next step is to transform this sequence into an algorithm.

\section{The $\mathtt{ICover}$ Algorithm}
\label{sec:icover-algorithm}

In this section,
we turn the growing sequence $\Uk$ of upward-closed subsets of markings
defined in \cref{sec:backward-pruning} into an algorithm.
Of course,
we cannot directly compute the sets $U_k$ since they may be infinite
(in fact, they are either empty or infinite).
Instead,
we will compute finite sets $B_k \subseteq \pset$ such that $U_k = \upa{B_k}$.
The existence of such finite sets is guaranteed by the following lemma.
A \emph{basis} of an upward-closed subset $U \subseteq \pset$ is
any set $B \subseteq \pset$ such that $U = \upa{B}$.
Recall that a \emph{minimal} element of a subset $S \subseteq \pset$ is
any $m \in S$ such that $u \leq m \Rightarrow u = m$ for every $u \in S$.

\begin{restatable}{lemma}{lemwqomin}
  \label{lem:wqo-min}
  For every subset $S \subseteq \pset$,
  the set $\Min S$ of its minimal elements is finite and satisfies
  $\upa{S} = \upa{\Min S}$.
\end{restatable}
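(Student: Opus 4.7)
The plan is to derive both claims from the fact that the component-wise order on $\pset$ is a well-quasi-order, i.e., from Dickson's lemma: every infinite sequence in $\pset$ admits an infinite non-decreasing subsequence, and equivalently $\pset$ contains no infinite antichain and no infinite strictly decreasing chain.

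First I would establish finiteness of $\Min S$ by contradiction. Suppose $\Min S$ is infinite and enumerate pairwise distinct elements $m_0, m_1, m_2, \ldots$ of $\Min S$. Dickson's lemma furnishes indices $i < j$ such that $m_i \leq m_j$. Since $m_i$ and $m_j$ are distinct, we actually have $m_i < m_j$, which contradicts the minimality of $m_j$ in $S$. Hence $\Min S$ must be finite.

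Next I would prove the equality $\upa{S} = \upa{\Min S}$. The inclusion $\upa{\Min S} \subseteq \upa{S}$ is immediate because $\Min S \subseteq S$. For the converse, pick any $u \in \upa{S}$, so $u \geq m$ for some $m \in S$. It is enough to show that every $m \in S$ dominates some element of $\Min S$: then $u \geq m \geq m'$ for some $m' \in \Min S$, giving $u \in \upa{\Min S}$. To obtain such an $m'$, I would appeal to the well-foundedness of $\leq$ on $\pset$ (which follows from well-foundedness of $\leq$ on $\setN$ coordinate-wise): starting from $m$, either $m \in \Min S$ or there exists $m'' \in S$ with $m'' < m$, and iterating yields a strictly decreasing chain in $S$ which must terminate, necessarily in an element of $\Min S$.

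I do not expect any serious obstacle: the statement is the well-known characterization of well-quasi-orders applied to Dickson's lemma. The only point that deserves care is making explicit the two features of $(\pset, \leq)$ that are being used, namely the absence of infinite antichains (for the finiteness of $\Min S$) and well-foundedness (for the existence of a minimal element below every $m \in S$). Both are standard consequences of Dickson's lemma.
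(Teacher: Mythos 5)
Your proof is correct and follows essentially the same route as the paper's: both derive finiteness of $\Min S$ from the well-quasi-order property given by Dickson's Lemma (no infinite antichain) and the inclusion $S \subseteq \upa{\Min S}$ from well-foundedness of $\leq$. You merely spell out the details that the paper leaves implicit.
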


\begin{corollary}
  Every upward-closed subset $U \subseteq \pset$ admits a finite basis.
\end{corollary}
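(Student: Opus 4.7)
The plan is to derive the corollary as an immediate consequence of \cref{lem:wqo-min}, applied to the upward-closed set $U$ itself. The key observation I will use is that an upward-closed set satisfies $U = \upa{U}$ by definition, which bridges the gap between the lemma (which speaks about $\upa{S}$ for an arbitrary $S$) and the corollary (which speaks about $U$ directly).

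Concretely, I would proceed in two short steps. First, I instantiate \cref{lem:wqo-min} with $S = U$. This yields that $\Min U$ is finite and that $\upa{U} = \upa{\Min U}$. Second, I invoke the assumption that $U$ is upward-closed, which means $U = \upa{U}$, and therefore $U = \upa{\Min U}$. By the definition of a basis recalled just before the lemma, this exhibits $\Min U$ as a basis of $U$, and its finiteness is precisely the content of the first conclusion of \cref{lem:wqo-min}. Taking $B = \Min U$ then witnesses the existence statement of the corollary.

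There is no real obstacle in this proof: the nontrivial well-quasi-order content (Dickson's lemma) has already been absorbed into \cref{lem:wqo-min}. The only point worth stating explicitly, to avoid any circularity or confusion, is the use of the equality $U = \upa{U}$ for upward-closed sets, which is what turns the generic statement of the lemma into a statement about $U$ itself. Accordingly, I would keep the proof to two or three lines, essentially: apply \cref{lem:wqo-min} to $U$, use upward-closedness to identify $U$ with $\upa{U}$, and conclude that $\Min U$ is a finite basis of $U$.
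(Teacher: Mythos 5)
Your proof is correct and is exactly the intended argument: the paper states the corollary without an explicit proof precisely because it follows by instantiating \cref{lem:wqo-min} with $S = U$ and using $U = \upa{U}$, which is what you do. Nothing is missing.
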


We still need to show how to compute a finite basis of $U_{k+1}$ from a finite basis of $U_k$.
To this end we introduce, for each transition $t \in T$,
the \emph{covering predecessor} function $\funcpre[^t] : \pset \rightarrow \pset$ defined by
$$
\cpre[^t]{m}(p) \ = \ F(p, t) + \max (0, m(p) - F(t, p))
$$
Informally,
$\cpre[^t]{m}$ is the least marking that can cover $m$ in one step
by firing the transition $t$.
This property will be formally stated in \cref{lem:cpre}.
The function $\funcpre[^t]$ is extended to sets of markings by
$\cpre[^t]{S} = \{\cpre[^t]{m} \mid m \in S\}$.

\begin{restatable}{lemma}{lemcpre}
  \label{lem:cpre}
  It holds that $\pre[^t]{\upa{m}} = \upa{\cpre[^t]{m}}$
  for every marking $m \in \pset$.
\end{restatable}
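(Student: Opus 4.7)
The plan is to prove the two inclusions $\upa{\cpre[^t]{m}} \subseteq \pre[^t]{\upa{m}}$ and $\pre[^t]{\upa{m}} \subseteq \upa{\cpre[^t]{m}}$ separately, after a preliminary observation that $\pre[^t]{\upa{m}}$ is upward-closed. The upward-closure of $\pre[^t]{\upa{m}}$ follows from monotonicity of the firing relation: if $m_1 \transition{t} m_1'$ with $m_1' \geq m$ and $m_2 \geq m_1$, then $t$ is still enabled at $m_2$ (since $m_2(p) \geq m_1(p) \geq F(p,t)$) and $m_2$ fires $t$ into $m_2' = m_2 - F(\cdot,t) + F(t,\cdot) \geq m_1' \geq m$.

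For the inclusion $\upa{\cpre[^t]{m}} \subseteq \pre[^t]{\upa{m}}$, by the preceding upward-closure remark it suffices to show that $\cpre[^t]{m}$ itself belongs to $\pre[^t]{\upa{m}}$. First, $t$ is enabled at $\cpre[^t]{m}$ because $\cpre[^t]{m}(p) = F(p,t) + \max(0, m(p)-F(t,p)) \geq F(p,t)$ for every place $p$. Firing $t$ from $\cpre[^t]{m}$ then produces the marking $m'$ given placewise by
$$
m'(p) \;=\; \cpre[^t]{m}(p) - F(p,t) + F(t,p) \;=\; \max(0, m(p)-F(t,p)) + F(t,p) \;=\; \max(F(t,p), m(p)),
$$
which satisfies $m'(p) \geq m(p)$ for every $p$. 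Hence $\cpre[^t]{m} \transition{t} m'$ with $m' \geq m$, so $\cpre[^t]{m} \in \pre[^t]{\upa{m}}$.

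For the converse inclusion, take any $m' \in \pre[^t]{\upa{m}}$, so $m' \transition{t} m''$ for some $m'' \geq m$. For each place $p$, enabledness gives $m'(p) \geq F(p,t)$, and from the definition of the step we have $m'(p) - F(p,t) + F(t,p) = m''(p) \geq m(p)$, hence $m'(p) \geq F(p,t) + m(p) - F(t,p)$. Combining both lower bounds yields $m'(p) \geq F(p,t) + \max(0, m(p) - F(t,p)) = \cpre[^t]{m}(p)$. Since this holds for every $p$, we conclude $m' \geq \cpre[^t]{m}$, i.e., $m' \in \upa{\cpre[^t]{m}}$.

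I do not expect a real obstacle here: the statement is a routine placewise verification, and the only mild subtlety is remembering to exploit \emph{both} the enabling condition and the covering condition simultaneously when deriving the lower bound in the $\subseteq$ direction — which is exactly what forces the $\max$ in the definition of $\cpre[^t]{m}$.
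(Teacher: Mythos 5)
Your proof is correct and follows essentially the same placewise verification as the paper: one direction combines the enabling condition with the covering condition to get $m'(p)\geq \cpre[^t]{m}(p)$, and the other checks that $t$ fires from $\cpre[^t]{m}$ into a marking dominating $m$. The only cosmetic difference is that you first establish upward-closure of $\pre[^t]{\upa{m}}$ and then verify membership of the single marking $\cpre[^t]{m}$, whereas the paper verifies the step directly for an arbitrary $u\geq\cpre[^t]{m}$; both amount to the same computation.
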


The previous lemma can easily be extended to sets of markings.
We extend it further,
in \cref{lem:cpre-inv},
to bridge the gap with the definition of $\Uk$.
The lemma shows how to compute a finite basis of $U_{k+1}$
from a finite basis of $U_k$.

\begin{lemma}
  \label{lem:cpre-inv}
  Let $I$ be a downward-closed invariant for $\petri$.
  For every subset $S \subseteq \pset$,
  it holds that $\upa{\pre[^t]{(\upa{S}) \cap I}} = \upa{(\cpre[^t]{S} \cap I)}$.
\end{lemma}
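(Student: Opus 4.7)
The plan is to reduce the identity to \cref{lem:cpre} together with the downward-closedness of~$I$.

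As a preliminary step I would lift \cref{lem:cpre} from singletons to arbitrary sets: since $\upa{S}=\bigcup_{s\in S}\upa{s}$ and $\pre[^t]{\cdot}$ distributes over unions, \cref{lem:cpre} yields $\pre[^t]{\upa{S}}=\bigcup_{s\in S}\upa{\cpre[^t]{s}}=\upa{\cpre[^t]{S}}$.

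I would then isolate the following small set-theoretic fact: for every $A\subseteq\pset$ and every downward-closed $I$, $\upa{(\upa{A}\cap I)}=\upa{(A\cap I)}$. The inclusion $\supseteq$ is immediate from $A\subseteq\upa{A}$. For $\subseteq$ it suffices to prove the pointwise statement $\upa{A}\cap I\subseteq\upa{(A\cap I)}$: given $m$ in the left-hand side, pick $a\in A$ with $a\leq m$; since $m\in I$ and $I$ is downward-closed, $a\in I$ as well, so $a\in A\cap I$ and hence $m\in\upa{(A\cap I)}$.

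Putting the ingredients together, the set-level version of \cref{lem:cpre} rewrites the left-hand side of the identity as $\upa{(\upa{\cpre[^t]{S}}\cap I)}$, and the auxiliary fact applied with $A=\cpre[^t]{S}$ collapses this to $\upa{(\cpre[^t]{S}\cap I)}$, which matches the right-hand side. I do not anticipate any real obstacle: the essential point is that downward-closedness of~$I$ is precisely the property that allows upward closure to commute with intersection by $I$ in this sense, after which the argument is a routine calculation.
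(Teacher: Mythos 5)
Your proposal is correct and follows essentially the same route as the paper: lift \cref{lem:cpre} to sets to get $\pre[^t]{\upa{S}} = \upa{\cpre[^t]{S}}$, then use the identity $\upa{((\upa{A}) \cap I)} = \upa{(A \cap I)}$ for downward-closed $I$ (which the paper states as ``readily seen'' and you prove explicitly). No gaps.
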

\begin{proof}
  The straightforward extension of \cref{lem:cpre} to sets of markings
  shows that
  $\pre[^t]{\upa{S}} = \upa{\cpre[^t]{S}}$ for every subset $S \subseteq \pset$.
  Moreover,
  it is readily seen that,
  for every subset $S \subseteq \pset$,
  $\upa{((\upa{S}) \cap I)} = \upa{(S \cap I)}$.
  This property follows from the assumption that $I$ is downward-closed.
  We derive that
  $$
  \begin{array}{r@{\ \ }c@{\ \ }l}
    \upa{(\cpre[^t]{S} \cap I)}
    & = & \upa{((\upa{\cpre[^t]{S}}) \cap I)}\\
    & = & \upa{(\pre[^t]{\upa{S}} \cap I)}
  \end{array}
  $$
  This concludes the proof of the lemma.
  \qed
\end{proof}

\begin{algorithm}[t]
  \NoCaptionOfAlgo
  \DontPrintSemicolon
  \Input{%
    A Petri Net $\petrituple$,
    a target marking $\final \in \pset$ and
    a downward-closed invariant $I$ for $\petri$.
  }
  \Output{%
    Whether there exists a marking $m \in \pset$ such that
    $\init \transitionstar m$ and $m \geq \final$.
  }

  \Begin{
    \eIf{$\final \in I$}{
      $B \leftarrow \{\final\}$\;
    }{
      $B \leftarrow \emptyset$\;
    }
    \While{$\init \not\in \upa{B}$}{
      \label{line:while-start}
      $N \leftarrow \{\cpre[^t]{m} \mid t \in T, m \in B\} \setminus \upa{B}$
      \tcc*{new predecessors}
      \label{line:assign:N}
      $P \leftarrow N \cap I$
      \label{line:assign:P}
      \tcc*{prune uncoverable markings}
      \If{$P = \emptyset$}{
        \label{line:snapshot}
        \Return{$\mathtt{False}$}\;
        \label{line:return:false}
      }
      $B \leftarrow \Min (B \cup P)$\;
      \label{line:while-end}
    }
    \Return{$\mathtt{True}$}\;
    \label{line:return:true}
  }
  \caption{$\mathtt{ICover}(\petri, \final, I)$}
  \label{algo:icover}
\end{algorithm}

The previous lemma leads to a backward coverability algorithm,
called $\mathtt{ICover}$ and presented on page~\pageref{algo:icover}.
Basically,
this procedure symbolically computes the growing sequence $\Uk$
of upward-closed sets.
Let us make the relationship between the procedure and the sequence $\Uk$
more precise.
Consider an input instance $(\petri, \final, I)$ of $\mathtt{ICover}$.
Since the procedure is deterministic,
$\mathtt{ICover}(\petri, \final, I)$ has a unique maximal execution,
that
either terminates
(at line~\ref{line:return:false} or~\ref{line:return:true})
or iterates the $\mathbf{while}$ loop
(lines~\ref{line:while-start}--\ref{line:while-end}) indefinitely.
Let $\ell_B, \ell_P \in \setN \cup \{\infty\}$ denote the numbers of executions
of lines~\ref{line:while-start} and~\ref{line:snapshot}, respectively.
It is understood that $l_P \leq l_B \leq l_P + 1$,
with the convention that $\infty + 1 = \infty$.
Let $(B_k)_{k < \ell_B}$ and $(P_k)_{k < \ell_P}$ denote the successive values
at lines~\ref{line:while-start} and~\ref{line:snapshot}
of the variables $B$ and $P$, respectively.

\begin{restatable}{lemma}{lemukbk}
  \label{lem:uk-bk}
  For every $k$ with $0 \leq k < \ell_B$,
  the set $B_k$ is a finite basis of $U_k$.
  For every $k$ with $0 \leq k < \ell_P$,
  the set $P_k$ is a finite basis of $\upa{(U_{k+1} \setminus U_k)}$.
\end{restatable}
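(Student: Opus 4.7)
The plan is to prove both statements simultaneously by induction on $k$, with a single inductive step that first establishes the $P_k$ claim from the $B_k$ claim at level $k$, and then uses it to establish the $B_{k+1}$ claim.

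For the base case $k=0$, I would just unfold definitions: the algorithm sets $B_0 = \{\final\}$ if $\final \in I$ and $B_0 = \emptyset$ otherwise, which matches $U_0 = \upa{(\final \cap I)}$ under the singleton convention. Finiteness is trivial. For the inductive step, assume $B_k$ is a finite basis of $U_k$. The first key computation is to rewrite $\upa{N}$ using the extension of \cref{lem:cpre} to sets:
\[
\upa{\{\cpre[^t]{m} \mid t \in T,\ m \in B_k\}} \;=\; \upa{\textstyle\bigcup_{t} \cpre[^t]{B_k}} \;=\; \textstyle\bigcup_t \pre[^t]{\upa{B_k}} \;=\; \pre{U_k}.
\]
Combined with \cref{lem:cpre-inv} (or more precisely the downward-closedness identity $\upa{(\upa{S}) \cap I} = \upa{(S \cap I)}$ used in its proof), this will give $\upa{(\cpre{B_k} \cap I)} = \upa{(\pre{U_k} \cap I)}$.

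Next, I would handle $P_k$. Writing $X = \upa{(\pre{U_k} \cap I)}$ so that $U_{k+1} \setminus U_k = X \setminus U_k$, I need to show $\upa{P_k} = \upa{X \setminus U_k}$. The inclusion $\subseteq$ is clear since $P_k \subseteq X \setminus U_k$ (note $P_k \cap \upa{B_k} = \emptyset$ by definition of $N$, and $U_k = \upa{B_k}$ by the inductive hypothesis). For $\supseteq$, given $u \in X \setminus U_k$, pick $m \in \cpre{B_k} \cap I$ with $m \leq u$; the case $m \in \upa{B_k} = U_k$ would force $u \in U_k$ by upward-closedness, a contradiction, so $m \in P_k$. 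This is the one place requiring a little care, and I would flag it as the main (small) subtlety of the proof.

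Finally, for $B_{k+1} = \Min(B_k \cup P_k)$, by \cref{lem:wqo-min} we have $\upa{B_{k+1}} = \upa{B_k} \cup \upa{P_k} = U_k \cup \upa{(U_{k+1} \setminus U_k)} = U_{k+1}$, where the last equality uses that $U_{k+1}$ is upward-closed and contains $U_k$. Finiteness of $P_k$ follows because it is a subset of the finite set $\{\cpre[^t]{m} \mid t \in T,\ m \in B_k\}$, and finiteness of $B_{k+1}$ follows again from \cref{lem:wqo-min}. This closes the induction and establishes both statements.
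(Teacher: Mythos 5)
Your proof is correct and follows essentially the same route as the paper: induction on $k$ using \cref{lem:cpre}/\cref{lem:cpre-inv} to relate $\cpre[^t]{B_k}$ to $\pre[^t]{U_k}$ and \cref{lem:wqo-min} for the minimization step. The only difference is organizational --- the paper absorbs the subtraction of $\upa{B_k}$ into the downward-closed set $I \setminus \upa{B_k}$ before applying \cref{lem:cpre-inv}, whereas you prove the equivalent identity $\upa{(A \setminus U)} = \upa{(\upa{A} \setminus U)}$ for upward-closed $U$ by a direct two-inclusion argument; both rest on the same observation, which you correctly identify as the one delicate point.
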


\vspace{-1em} %
\begin{restatable}{theorem}{thmicover}
  The procedure $\mathtt{ICover}$ terminates on every input and is correct.
\end{restatable}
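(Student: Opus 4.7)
The plan is to reduce both termination and correctness of $\mathtt{ICover}$ to the three previously established lemmas, using \cref{lem:uk-bk} as the bridge between the algorithmic variables $B_k, P_k$ and the mathematical sequence $\Uk$.

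For termination, I would argue by contradiction. Suppose the maximal execution iterates the while loop indefinitely, so that $\ell_B = \ell_P = \infty$. Since the algorithm then never returns $\mathtt{False}$ at line~\ref{line:return:false}, we must have $P_k \neq \emptyset$ for every $k \in \setN$. By \cref{lem:uk-bk}, $\upa{P_k} = \upa{(U_{k+1} \setminus U_k)}$, so $P_k \neq \emptyset$ forces $U_{k+1} \supsetneq U_k$. This contradicts \cref{lem:uk-stationary}, which guarantees that $\Uk$ is ultimately stationary.

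For correctness, I would split on the return value. If $\mathtt{ICover}$ returns $\mathtt{True}$ at line~\ref{line:return:true}, then the loop guard has failed, so $\init \in \upa{B_k}$ for some $k < \ell_B$. By \cref{lem:uk-bk}, $\upa{B_k} = U_k$, hence $\init \in \bigcup_j U_j$, and \cref{lem:uk-cov} gives $\final \in \Cov$. If $\mathtt{ICover}$ returns $\mathtt{False}$ at line~\ref{line:return:false}, then for some $k < \ell_P$ we have $P_k = \emptyset$ and $\init \notin \upa{B_k}$; by \cref{lem:uk-bk} this forces $\upa{(U_{k+1} \setminus U_k)} = \upa{\emptyset} = \emptyset$, hence $U_{k+1} = U_k$ since the sequence $\Uk$ is monotone. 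A straightforward induction on $j \geq k$ using the recursion defining $\Uk$ then shows $U_j = U_k = \upa{B_k}$ for every $j \geq k$, so $\init \notin \bigcup_j U_j$, and \cref{lem:uk-cov} yields $\final \notin \Cov$.

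There is no real obstacle once \cref{lem:uk-bk} is in place; the theorem is essentially a synthesis of \cref{lem:uk-stationary,lem:uk-cov,lem:uk-bk}. The only slightly delicate bookkeeping is the inductive propagation of stability, namely that $U_{k+1} = U_k$ implies $U_j = U_k$ for every $j \geq k$, which is what guarantees $\bigcup_j U_j = \upa{B_k}$ in the $\mathtt{False}$-case.
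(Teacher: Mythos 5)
Your proposal is correct and follows essentially the same route as the paper's proof: termination by combining \cref{lem:uk-stationary} with the fact (from \cref{lem:uk-bk}) that $P_k$ is a basis of $\upa{(U_{k+1}\setminus U_k)}$, and correctness by the same case split on the return value, translating $B_k$ and $P_k$ into $U_k$ via \cref{lem:uk-bk} and concluding with \cref{lem:uk-cov}, including the same propagation of $U_{k+1}=U_k$ to all later indices in the $\mathtt{False}$ case.
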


\begin{remark}\label{rem:prepro}
  Petri nets obtained by translation from high-level concurrent programs
  often contain transitions that cannot be fired from any
  reachable marking.
% Those transitions usually come from modeling
% features that encode unsafe behaviours.
  Downward-closed invariants can be used in a
  pre-processing algorithm to filter out some of them.
  Basically, if a transition $t$ is not
  enabled in any marking of an invariant $I$, it
  can be safely removed without modifying the coverability set.
  Algorithmically,
  when $I$ is downward-closed,
  detecting such a property just reduces to a membership problem in $I$.
  In fact a transition $t$ is enabled in a
  downward-closed set of markings $D$ if, and only if, $D$ contains the marking $m_t$ defined by
  $m_t(p)=F(p,t)$ for every place $p$.
\end{remark}

%
% The backward coverability algorithm is a classical algorithm to solve the coverability problem. The idea is to build a sequence of upward-closed set $\upa{B_{k}}$. We start with $\upa{B_{0}} = \upa{ \{ m_{final} \} }$ and for $i \in \mathbb{N}$ we build $\upa{B_{k+1}} = \upa{Cpre(B_{k})} \cup \upa{B{i}}$
%
% The set $\downclosure = \upa{ \{ m \in \pset | m \xrightarrow{*} m_{final} \} } \subseteq \pset$ is equal to $\cup_{k} \upa{B_{k}}$.
%
% %
%
% A variant to this algorithm is to use an over-approximation $\indinv$ of the coverability set to accelerate the computation by pruning some of the element in the basis. The computation of $\upa{B_{k+1}}$ will be:  $\upa{B_{k+1}} =~\upa{(Cpre(B_{k}) \cap \indinv)}~\cup \upa{B{i}}$. Thanks to the Dickson's lemma all the set $cpre(B_{k})$ are finite hence all $B_{k}$ set and therefore we can use in a pratical algorithm. See figure~\ref{algo:backward} with the utilization of the set $\indinv$ lines 3, 9 and 10.
%

The algorithm $\mathtt{ICover}$ is parametrized by an a priori known
downward-closed invariant that is given as input.
On the one hand,
this invariant needs to be precise enough to discard markings (at line~\ref{line:assign:P}) and accelerate the main loop.
On the other hand,
we need to decide efficiently whether a marking is in the invariant,
to avoid slowing down the main loop.
The next two sections show how to generate downward-closed invariants
with efficient membership testing.

\section{State Inequation for Downward-Closed Invariants}\label{sec:stateeq}
The state equation provides a simple over-approximation of  Petri net reachability relations that was successfully used in two recent algorithms for deciding the coverability problems \cite{EsparzaLMMN14,Blondinetal16}. This equation is obtained by introducing the total function $\Delta(t)$ in $\setZ^P$ called the \emph{displacement} of a transition $t$ and defined for every place $p$ by $\Delta(t)(p)=F(t,p)-F(p,t)$. Let us assume that a marking $\final$ is in the coverability set of a Petri net $\petri$. It follows that there exists a word $t_1\ldots t_k$ of transitions and a marking $m\geq \final$ such that $\init\xrightarrow{t_1}\cdots\xrightarrow{t_k}m$. We derive the following relation:
$$\init+\Delta(t_1)+\cdots+\Delta(t_k)=m\geq \final$$
By reordering the sum $\Delta(t_1)+\cdots+\Delta(t_k)$, we can group together the displacements $\Delta(t)$ corresponding to the same transition $t$. Denoting by $\lambda(t)$ the number of occurrences of $t$ in the word $t_1\ldots t_k$, we get:
\begin{equation}\label{eq:stateeq}
  \init+\sum_{t\in T}\lambda(t)\Delta(t)\geq \final
\end{equation}
The relation~\eqref{eq:stateeq} is called the \emph{state inequation} for the coverability problem. Notice that a similar equation can be derived for the reachability problem by replacing the inequality by an equality. We do not consider this equality in the sequel since we restrict our attention to the coverability problem. We introduce the following set $I_S$ where $\setQ_{\geq 0}$ is the set of non-negative rational numbers.
\begin{equation}\label{eq:MSdef}
  I_S=\{m\in\pset \mid \exists \lambda\in \setQ_{\geq 0}^T: \init+\sum_{t\in T}\lambda(t)\Delta(t)\geq m\}
\end{equation}

\begin{proposition}
  The set $I_S$ is a downward-closed invariant with a polynomial-time membership problem.
\end{proposition}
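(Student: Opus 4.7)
The plan is to check the three requirements (downward-closedness, invariance, polynomial-time membership) in turn, each of which follows almost immediately from the definition of $I_S$.

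First, I would verify that $I_S$ is downward-closed. Suppose $m \in I_S$ with witness $\lambda \in \setQ_{\geq 0}^T$, and let $m' \leq m$. Then the same $\lambda$ witnesses $m' \in I_S$, since $\init + \sum_{t \in T} \lambda(t)\Delta(t) \geq m \geq m'$. So $m' \in I_S$.

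Second, I would show that $I_S$ contains every reachable marking. If $\init \transitionstar m$, pick any firing sequence $t_1 \cdots t_k$ realising this reachability, and let $\lambda(t) \in \setN \subseteq \setQ_{\geq 0}$ be the number of occurrences of $t$ in the sequence. The derivation of the state inequation~\eqref{eq:stateeq} recalled just above the definition of $I_S$ yields $\init + \sum_{t \in T} \lambda(t)\Delta(t) = m \geq m$, hence $m \in I_S$. In particular, every marking in $\Cov$ belongs to $I_S$ (using downward-closedness for the downward closure in the definition of $\Cov$).

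Third, I would argue polynomial-time membership. Given a marking $m \in \pset$, deciding whether $m \in I_S$ amounts to checking feasibility of the linear constraint system
\[
  \lambda \geq 0, \qquad \init + \sum_{t \in T} \lambda(t)\Delta(t) \geq m
\]
over the rationals, with $|T|$ unknowns and $|P|$ inequalities (plus non-negativity). This is a standard linear programming feasibility problem and is decidable in polynomial time by any polynomial LP algorithm (for instance Khachiyan's ellipsoid method or an interior-point method), with input size polynomial in the sizes of $\init$, $m$ and $F$.

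I do not expect any real obstacle: the three items are direct. The only point to be careful about is to use rational (rather than integer) $\lambda$ in the definition of $I_S$, which is what makes the membership test a linear program rather than an integer program; this choice is already baked into~\eqref{eq:MSdef}, and it yields a (possibly coarser) over-approximation that is still a downward-closed invariant by the argument above.
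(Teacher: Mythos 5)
Your proof is correct and follows exactly the route the paper intends: the invariance part is precisely the derivation of the state inequation given in the text just before the proposition (with $\lambda$ the occurrence counts of transitions, which are natural and hence non-negative rational), and downward-closedness and polynomial-time LP feasibility are the remaining immediate observations. The paper gives no separate proof of this proposition, so your write-up simply makes explicit what it leaves implicit; no gaps.
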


A more precise downward-closed invariant can be obtained by requiring that $\lambda\in\setN^T$. In particular, the pruned backward algorithm presented in \cref{sec:icover-algorithm} should produce smaller sets of configurations with this more precise invariant. In practice, we do not observe any significant improvement on a large set of benchmarks. Moreover, whereas the membership problem of a marking $m$ is decidable in polynomial time when $\lambda$ ranges over $\setQ_{\geq 0}^T$, the problem becomes NP-complete when $\lambda$ is restricted to $\setN^T$.

\newcommand{\prop}[2]{\operatorname{prop}_{#1}(#2)}
\section{Sign Analysis for Downward-Closed Invariants}\label{sec:mark}\label{sec:sign-analysis}
In this section we introduce a downward-closed invariant based on data-flow sign analysis~\cite{Cousot:1977:POPL}. Rephrased in the context of Petri nets, an invariant $I$ is said to be \emph{inductive} if $m\xrightarrow{t}m'$ and $m\in I$ implies $m'\in I$. Sign analysis then reduces to the computation of the maximal (for the inclusion) set $Z$ of places such that the following set $I_Z$ is an inductive invariant:
\begin{equation}\label{eq:MZdef}
  I_Z=\{m\in\setN^P \mid \bigwedge_{p\in Z}m(p)=0\}
\end{equation}
The unicity of that set is immediate since the class of sets $Z$ such that $I_Z$ is an invariant is clearly closed under union. In the sequel, $Z$ denotes the maximal set satisfying this property, and this maximal set is shown to be computable in polynomial time thanks to a fixpoint propagation. We introduce the operator $\operatorname{prop}_{t}:2^P\rightarrow2^P$ associated to a transition $t$ and defined for any set $Q$ of places as follows:
$$\prop{t}{Q}=\begin{cases}
  \{q\in P \mid F(t,q)>0\} & \text { if  }\bigwedge_{p\in P\moins Q}F(p,t)=0\\
  \emptyset & \text{ otherwise}
\end{cases}$$
Intuitively, if $t$ is a transition such that $\bigwedge_{p\in P\moins Q}F(p,t)=0$ then from a marking with large number of tokens in each place of $Q$, it is possible to fire $t$. In particular places $q$ satisfying $F(t,q)>0$ cannot be in $Z$. This property is formally stated by the following lemma.
\begin{restatable}{lemma}{lemprop}
  \label{lem:prop}
  We have $\prop{t}{Q}\subseteq P\moins Z$ for every set $Q\subseteq P\moins Z$.
\end{restatable}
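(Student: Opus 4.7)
The plan is to fix a transition $t$ and a set $Q \subseteq P \setminus Z$, and split on the case in the definition of $\prop{t}{Q}$. If $\bigwedge_{p \in P \setminus Q} F(p,t) = 0$ fails, then $\prop{t}{Q} = \emptyset$ and the inclusion is immediate, so the whole content lies in the other case.

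So assume $F(p,t) = 0$ for every $p \in P \setminus Q$. Since $Q \subseteq P \setminus Z$, we have $Z \subseteq P \setminus Q$, and therefore $F(p,t) = 0$ for every $p \in Z$ in particular. The key step is to exhibit, for each $q$ with $F(t,q) > 0$, a witness marking $m \in I_Z$ from which $t$ is firable and such that the resulting marking $m'$ has $m'(q) > 0$. Define $m$ by $m(p) = 0$ for $p \in Z$ and $m(p) = F(p,t)$ for $p \in P \setminus Z$. Then $m \in I_Z$ by construction, and $m(p) \geq F(p,t)$ holds for every place $p$ (trivially for $p \in P \setminus Z$, and for $p \in Z$ using $F(p,t) = 0$), so $t$ is enabled at $m$. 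Let $m'$ be the unique marking with $m \transition{t} m'$.

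Now invoke the fact that $I_Z$ is inductive, which holds by the very definition of $Z$ as the maximal set for which $I_Z$ is an inductive invariant. Inductiveness gives $m' \in I_Z$, i.e., $m'(p) = 0$ for every $p \in Z$. For $p \in Z$ we compute $m'(p) = m(p) - F(p,t) + F(t,p) = 0 - 0 + F(t,p) = F(t,p)$, so $F(t,p) = 0$ for every $p \in Z$. Consequently $\{q \in P \mid F(t,q) > 0\} \cap Z = \emptyset$, which is exactly the desired inclusion $\prop{t}{Q} \subseteq P \setminus Z$.

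I do not expect any serious obstacle: the only subtle point is keeping the direction of the containments straight ($Q \subseteq P \setminus Z$ forces $Z \subseteq P \setminus Q$, which is what lets us push the vanishing of $F(\cdot,t)$ through to all of $Z$). The rest is just a one-step firing in a carefully chosen marking that saturates $I_Z$ as much as possible, together with the inductiveness guaranteed by maximality of $Z$.
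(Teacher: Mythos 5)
Your proof is correct and follows essentially the same route as the paper: both arguments take the minimal marking enabling $t$ (your $m$ coincides with the paper's $m_t$, since $F(p,t)=0$ on $Z$), note that it lies in $I_Z$ because $Z\subseteq P\moins Q$, fire $t$, and invoke inductiveness of $I_Z$ to conclude that $F(t,q)=0$ for all $q\in Z$. No gaps.
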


The set $Z$ can be computed as a fixpoint by introducing the non-decreasing sequence $Q_0,Q_1,\ldots$ of places defined as follows:
\begin{align*}
  Q_0&=\{q\in P \mid \init(q)>0\}\\
  Q_{k+1}&=Q_k\cup\bigcup_{t\in T}\prop{t}{Q_k}
\end{align*}
Let us notice that the set $Q=\bigcup_{k\geq 0}Q_k$ is computable in polynomial time. The following lemma shows that $Q$ provides the set $Z$ as a complement.
\begin{restatable}{lemma}{lemzpmoinsq}
  We have $Z=P\moins Q$.
\end{restatable}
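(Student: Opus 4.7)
The plan is to prove the two inclusions $Q \subseteq P \setminus Z$ and $P \setminus Q \subseteq Z$ separately. Since $Z$ is defined as the maximal set of places for which $I_Z$ is an (inductive) invariant, the second inclusion will be established by exhibiting $P \setminus Q$ as such a set, and invoking maximality.

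For $Q \subseteq P \setminus Z$, I would proceed by induction on $k$ to show $Q_k \subseteq P \setminus Z$, and then take the union. The base case: since $I_Z$ is an invariant, it contains the initial marking $\init$, so $\init(p) = 0$ for every $p \in Z$; equivalently, $\init(p) > 0$ implies $p \notin Z$, which gives $Q_0 \subseteq P \setminus Z$. The inductive step is immediate from \cref{lem:prop}, which says $\prop{t}{Q_k} \subseteq P \setminus Z$ whenever $Q_k \subseteq P \setminus Z$; together with the induction hypothesis, this yields $Q_{k+1} \subseteq P \setminus Z$.

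For $P \setminus Q \subseteq Z$, by maximality of $Z$ it is enough to verify that $I_{P \setminus Q}$ is an inductive invariant. Containment of $\init$ in $I_{P \setminus Q}$ is direct: if $p \in P \setminus Q$ then in particular $p \notin Q_0$, so $\init(p) = 0$. For inductiveness, take $m \in I_{P \setminus Q}$ and a transition $t$ with $m \transition{t} m'$; I need $m'(p) = 0$ for every $p \in P \setminus Q$. Since every place $p' \in P \setminus Q$ satisfies $m(p') = 0$, and since $t$ is enabled at $m$, we must have $F(p', t) = 0$ for all $p' \in P \setminus Q$. This is exactly the condition in the first branch of the definition of $\operatorname{prop}_t$, so $\prop{t}{Q} = \{q \in P \mid F(t,q) > 0\}$. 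By construction $Q$ is a fixpoint of the propagation, hence $\prop{t}{Q} \subseteq Q$, which forces $F(t, p) = 0$ for every $p \in P \setminus Q$. Combined with $m(p) = 0$ and $F(p, t) = 0$, this gives $m'(p) = m(p) - F(p,t) + F(t,p) = 0$, as desired.

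The argument is essentially bookkeeping once the setup is in place; the only subtlety to keep in mind is that being an invariant requires containing the initial marking (not merely being closed under transitions), which is what forces $Q_0 \cap Z = \emptyset$ and thereby anchors both inclusions. The main conceptual step — recognising that the enabledness of $t$ at a marking in $I_{P \setminus Q}$ automatically activates the non-trivial case of $\operatorname{prop}_t$ — is what makes the fixpoint characterisation work, and it is the point at which I would be most careful in the write-up.
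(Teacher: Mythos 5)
Your proof is correct and follows essentially the same route as the paper's: the inclusion $Q\subseteq P\moins Q$... rather, $Q\subseteq P\moins Z$ is obtained by induction on $k$ using \cref{lem:prop}, and the reverse inclusion by showing that $\{m\in\setN^P \mid \bigwedge_{p\in P\moins Q}m(p)=0\}$ is an inductive invariant and invoking the maximality of $Z$. The only cosmetic difference is that you argue the inductive step directly via the fixpoint property $\prop{t}{Q}\subseteq Q$, whereas the paper phrases the same step as a proof by contradiction.
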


\begin{corollary}
  The set $Z$ is computable in polynomial time.
\end{corollary}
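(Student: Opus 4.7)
The plan is to derive polynomial-time computability of $Z$ directly from the fixpoint characterization $Z = P \setminus Q$ established by the previous lemma, together with a simple monotonicity argument on the sequence $(Q_k)_k$.

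First I would observe that, by construction, the sequence $Q_0, Q_1, \ldots$ is non-decreasing for inclusion: the recurrence $Q_{k+1} = Q_k \cup \bigcup_{t \in T} \prop{t}{Q_k}$ only adds places. Since every $Q_k$ is a subset of the finite set $P$, the chain must stabilize, and in fact after at most $|P|$ strict inclusions. Hence there exists some $k \leq |P|$ with $Q_{k+1} = Q_k$, and a straightforward induction then shows $Q_j = Q_k$ for all $j \geq k$, so $Q = Q_k$. Therefore $Q$ is reached in at most $|P|$ iterations of the fixpoint loop.

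Next I would estimate the cost of a single iteration. Computing $\prop{t}{Q_k}$ for a given transition $t$ requires (i) checking whether $F(p,t) = 0$ for every $p \in P \setminus Q_k$, and (ii) if so, collecting the places $q$ with $F(t,q) > 0$. Both subtasks are linear in $|P|$, given constant-time access to the flow function $F$. Summing over all $t \in T$ gives $\mathcal{O}(|T| \cdot |P|)$ work per iteration, and multiplying by the $\mathcal{O}(|P|)$ iterations yields total time $\mathcal{O}(|T| \cdot |P|^2)$ to compute $Q$.

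Finally, once $Q$ is computed, the set $Z = P \setminus Q$ is obtained in time $\mathcal{O}(|P|)$, by the preceding lemma. This gives the desired polynomial-time algorithm. There is really no hard obstacle here: the whole argument rests on the non-decreasing nature of the sequence together with the finite ambient set $P$, which is a classical Kleene-style fixpoint termination argument, together with the already-established identity $Z = P \setminus Q$. The only point worth being careful about is to state the termination bound explicitly, so that the polynomial-time claim is unambiguous.
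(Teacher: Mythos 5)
Your proof is correct and follows the same route as the paper: the paper simply notes that $Q=\bigcup_{k}Q_k$ is computable in polynomial time and combines this with the lemma $Z=P\setminus Q$, leaving the Kleene-style stabilization and cost analysis implicit. You have merely spelled out those routine details, so no further comment is needed.
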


\section{Experimental Evaluation}\label{sec:bench}

We implemented our approach using the $\qcover$~\cite{Blondinetal16} tool as a starting point.
This tool,
which implements a backward coverability algorithm for Petri nets,
is written in Python and relies on the SMT-solver $z3$~\cite{DBLP:conf/tacas/MouraB08}.
$\qcover$ also uses some other heuristics that we kept unchanged.
$\qcover$ was competitive with others tools especially for uncoverable Petri net.
Only the $\tool{BFC}$ tool performs significantly better on coverable Petri net.
We have made two modifications to $\qcover$.
First, we have added a pre-processing step (see \cref{rem:prepro}) based on sign analysis.
Second,
we have replaced their pruning technique,
which is based on coverability in continuous Petri nets,
by the one of our algorithm $\icover$ presented in \cref{sec:icover-algorithm}.
$\icover$ is available as a patch~\cite{icover} for $\qcover$~\cite{qcover}.

\smallskip

To test our implementation,
we used the same benchmark as $\tool{Petrinizer}$~\cite{EsparzaLMMN14} and $\qcover$~\cite{Blondinetal16}.
It comprises models from various sources: \tool{Mist}~\cite{mist},
$\tool{BFC}$~\cite{KaiserKW14},
${\tt Erlang}$ programs abstracted into Petri nets~\cite{DOsualdoKO13},
as well as so-called ${\tt medical}$ and ${\tt bug\_tracking}$ examples~\cite{EsparzaLMMN14}.
We let each tool work for 2000 seconds in a machine on Ubuntu Linux 14.04 with Intel(R) Core(TM) i7-4770 CPU at 3.40GH with 16 GB of memory for each benchmark.
The computation times are the sum of the $\mytt{system}$ and $\mytt{user}$ times.
% return by the function $\mytt{time}$.
Overall $\qcover$ solved 106 uncoverable instances on 115 Petri net and 37 coverable problems on 61 Petri nets.
$\icover$ was able to find one more coverable instance.
In fact calling $\qcover$ on the Petri net computed by the pre-processing,
that we will call $\preqcover$, can even solve one more uncoverable instance than $\icover$.
On the 143 instances that $\qcover$ solved, the tool took 10318 seconds,
$\preqcover$ used 6479 seconds,
and $\icover$ used only 5162 seconds.

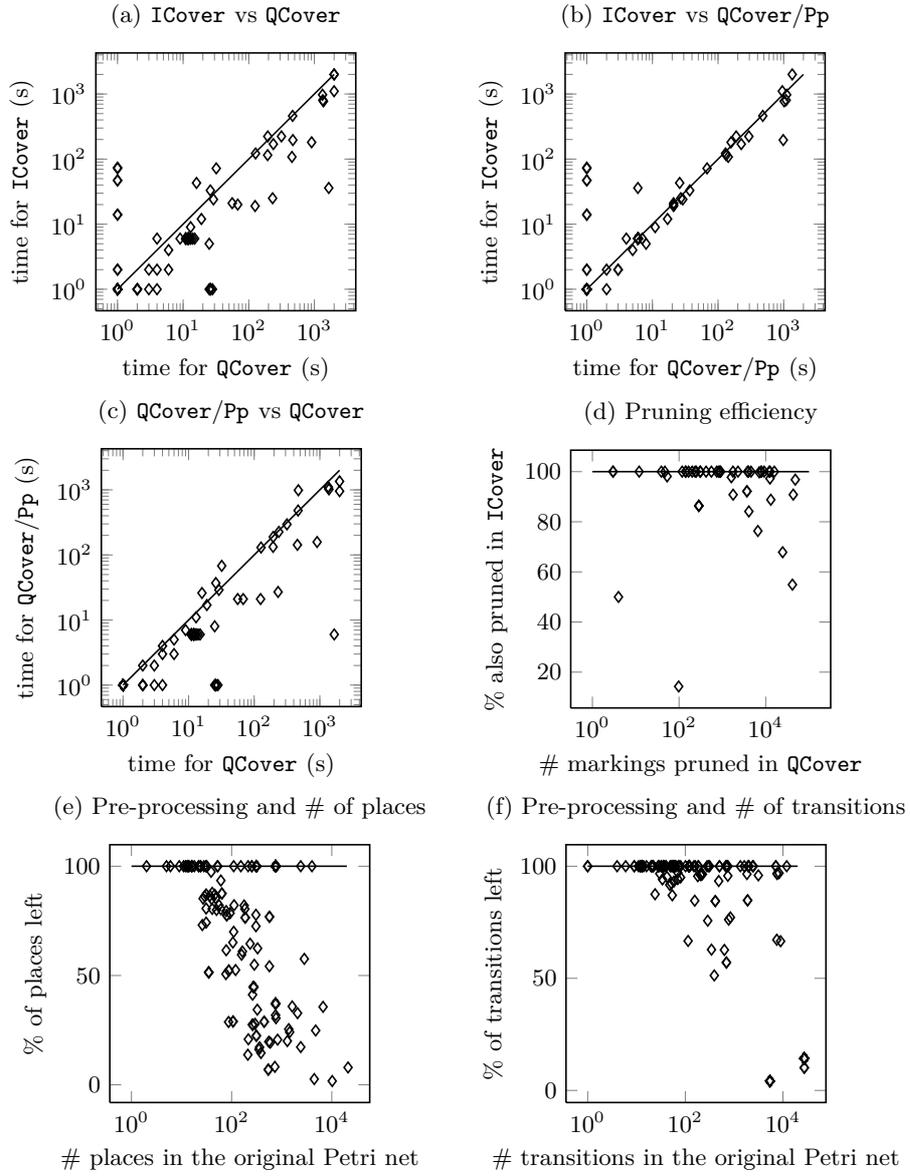
\begin{figure}[h!]
  \begin{subfigure}{.5\textwidth}
    \begin{tikzpicture}
      \begin{loglogaxis}[%
        title=(a) $\icover$ vs $\qcover$,
        width=5cm,
        height=5cm,
        grid=none,
        xlabel={time for $\qcover$ (s)},
        ylabel={time for $\icover$ (s)}
        ]
        \addplot[only marks,mark=diamond,scatter/use mapped color= {draw=black,fill=black}] file {plots/friday2prelimit_to_noqcover.dat};

        \addplot [mark=none,black,domain=1:2000] {x};
      \end{loglogaxis}
    \end{tikzpicture}    
    \label{fig:icoverqcover}
  \end{subfigure}  
  \hfill
  \begin{subfigure}{.5\textwidth}
    \begin{tikzpicture}
      \begin{loglogaxis}[title=(b) $\icover$ vs $\preqcover$,
        width=5cm,
        height=5cm,
        xlabel={time for $\preqcover$ (s)},
        ylabel={time for $\icover$ (s)},  ]
        \addplot[only marks,mark=diamond,scatter/use mapped color= {draw=black,fill=black}] file {plots/friday2prelimit_to_preqcover.dat};
        \addplot [mark=none,black,domain=1:2000] {x};
      \end{loglogaxis}
    \end{tikzpicture}
    \label{fig:icoverpreqcover}
  \end{subfigure}  
  
  \begin{subfigure}{.5\textwidth}
  \end{subfigure}  
  \begin{subfigure}{.5\textwidth}
    \begin{tikzpicture}
      \begin{loglogaxis}[title=(c) $\preqcover$ vs $\qcover$,
        width=5cm,
        height=5cm,
        xlabel={time for $\qcover$ (s)},
        ylabel={time for $\preqcover$ (s)},  ]
        \addplot[only marks,mark=diamond,scatter/use mapped color= {draw=black,fill=black}] file {plots/friday2preqcover_to_noqcover.dat};

        \addplot [mark=none,black,domain=1:2000] {x};
      \end{loglogaxis}
    \end{tikzpicture}
    \label{fig:preqcoverqcover}
  \end{subfigure}  
  \hfill
  \begin{subfigure}{.5\textwidth}
    \begin{tikzpicture}
      \begin{semilogxaxis}[title=(d) Pruning efficiency,
        width=5cm,
        height=5cm,
        xlabel={\# markings pruned in $\qcover$},
        ylabel={\% also pruned in $\icover$},  ]
        \addplot[only marks,mark=diamond,scatter/use mapped color= {draw=black,fill=black}] file {plots/glitch.dat};        
        \addplot [mark=none,black,domain=1:100000] {100.0};
      \end{semilogxaxis}
    \end{tikzpicture}
    \label{fig:comparison}
  \end{subfigure}  
  \begin{subfigure}{.5\textwidth}
    \centering 
    \begin{tikzpicture}
      \begin{semilogxaxis}[title=(e) Pre-processing and \# of places, %
        width=5cm,
        height=5cm,
        xlabel={\# places in the original Petri net},
        ylabel={\% of places left},  ]
        \addplot[only marks,mark=diamond,scatter/use mapped color= {draw=black,fill=black}] file {plots/dimensionspourcent.dat};
        
        \addplot [mark=none,black,domain=1:20000] {100};
      \end{semilogxaxis}
    \end{tikzpicture}
    \label{fig:predimension}
  \end{subfigure}  
  \begin{subfigure}{.5\textwidth}
  \centering
    \begin{tikzpicture}
      \begin{semilogxaxis}[title=(f) Pre-processing and \# of transitions, %
        width=5cm,
        height=5cm,
        xlabel={\# transitions in the original Petri net},
        ylabel={\% of transitions left},  ]
        \addplot[only marks,mark=diamond,scatter/use mapped color= {draw=black,fill=black}] file {plots/transitionspourcent.dat};        
        \addplot [mark=none,black,domain=1:20000] {100.0};
      \end{semilogxaxis}
    \end{tikzpicture}    
    \label{fig:pretransitions}
\end{subfigure}
\caption{Experimental results for $\icover$, $\qcover$ and $\preqcover$}\label{fig:exp}
\end{figure}

\smallskip

\cref{fig:exp}(a) shows the comparison between $\icover$ and $\qcover$ in time. The straight line represents when the two tools took the same time.
Each dot represents a coverability question.
When the dot is under the line, it means that $\icover$ was faster than $\qcover$ and conversely.
There are three instances where $\qcover$ performs very well, under a second, and where $\icover$ took a few tens of seconds to answer. For the three cases, the formula used by $\qcover$ for coverability in $\mathbb{Q}$ was enough to discard the target as uncoverable and it didn't have to enter in the while loop. But $\icover$ wasn't able to discard the target and had to enter the while loop in the three cases.
We also see two dots above the line at the middle of the figure. The pre-processing took respectively 12 and 45 seconds while the initial Petri net was solved by $\qcover$ in respectively 16 and 33 seconds. The pre-processing has not been optimized yet, and it could probably run faster.

\smallskip

\cref{fig:exp}(b) and (c) show the intermediate comparisons: $\icover$ versus $\preqcover$ and $\preqcover$ versus $\qcover$. We can observe that the pre-processing has a major impact on the good performance of $\icover$ compared to $\qcover$.

\smallskip

\cref{fig:exp}(e) and \cref{fig:exp}(f) aims to show the effect of the pre-processing on the size of Petri nets. The former show the percentage of places left after pre-processing. Some Petri nets kept all their places but others were left with only 2.5\% of their initial places. And most of Petri nets lost a significant number of places.
The latter shows the percentages of transitions left after the pre-processing. Overall less transitions were cut than places.
Half of the Petri nets kept all their transitions, but some were left with only 4\% of their initial transitions.

\smallskip

\cref{fig:exp}(d) compares the efficiency of pruning between $\icover$ and $\qcover$.
Again, each dot represents a coverability question.
As discussed in \cref{sec:comparison},
$\qcover$ always prunes at least as many markings as $\icover$
(but at the expense of more complex pruning tests).
A value of 100\% means that $\icover$ was able to prune the same markings as $\qcover$.
It turns out that on most instances,
this perfect value of 100\% is obtained.
This is rather surprising at first sight,
and warrants an investigation,
which is the focus of the next section.

\section{Comparison with Continuous Petri Net}\label{sec:comparison}
Continuous Petri nets are defined like Petri nets except that transitions can be fired a non-negative rational number of times. The firing of such a transition produces markings with non-negative rational numbers of tokens. Under such a semantics, called the \emph{continuous semantics}, the reachability problem was recently proved to be decidable in polynomial time~\cite{FracaH15}. Based on this observation, the tool $\qcover$ implements the pruning backward coverability algorithm presented in \cref{sec:pruning} with a downward-closed invariant derived from the continuous semantics.
Whereas this invariant is more precise than the downward-closed invariant obtained from the state inequation introduced in \cref{sec:stateeq}, we have seen in \cref{sec:bench} that such an improvement is overall not useful in practice for the pruning backward algorithm. In this section, we provide a simple structural condition on Petri nets in such a way the two kinds of downward-closed invariants derived respectively from the continuous semantics and the state inequation are ``almost'' equal. This structural condition is shown to be natural since it is fulfilled by the Petri nets obtained after the pre-processing introduced in \cref{rem:prepro}.

\medskip

A \emph{continuous marking} is a mapping $m\in\setQ_{\geq0}^P$ where $\setQ_{\geq 0}$ denotes the set of non-negative rational numbers, and $P$ the set of places. Given $r\in\setQ_{\geq 0}$ and a transition $t$, the continuous $rt$-step binary relation $\dotarrow{rt}$ over the continuous markings is defined by
$$
m \dotarrow{rt} m'
\ \Leftrightarrow \ 
\forall p \in P : m(p) \geq r.F(p, t) \wedge m'(p) = m(p) - r.F(p,t) + r.F(t,p)
$$
The one-step continuous binary relation $\dotarrow{}$ is the union of these $rt$-step relations. Formally, $m\dotarrow{}m'$ if there exists $r\in\setQ_{\geq 0}$ and $t\in T$ such that $m\dotarrow{rt}m'$. The many-step continuous binary relation $\dotarrow{*}$ is the reflexive-transitive closure of $\dotarrow{}$. We also introduce the binary relation $\dotarrow{\infty}$ defined over the continuous markings by $m \dotarrow{\infty}m'$ if there exists a sequence $(m_k)_{k\geq 0}$ of continuous markings that \emph{converges} towards $m'$ with the classical topology on $\setQ_{\geq 0}^P$ and such that $m\dotarrow{*}m_k$ for every $k$. 

\begin{example}\label{ex:CS}
  Let us look back at the simple Petri net $\petri$ depicted in \cref{figure:petri}. For every positive natural number $k$, we have: $$(1,0,0)\dotarrow{\frac{1}{k}t_1}(1-\frac{1}{k},\frac{1}{k},0)\dotarrow{\frac{1}{k}t_2\frac{1}{k}t_3}(1-\frac{1}{k},\frac{2}{k},\frac{1}{k})\cdots \dotarrow{\frac{1}{k}t_2\frac{1}{k}t_3}(1-\frac{1}{k},1+\frac{1}{k},1)$$
  It follows that $(1,0,0)\dotarrow{\infty}(1,1,1)$. Notice that the relation $(1,0,0)\dotarrow{*}(1,1,1)$ does not hold.
\end{example}

The downward-closed invariant used in the tool $\qcover$ for implementing the pruning backward algorithm is defined as follows:
\begin{equation}\label{eq:MCdef}
  I_C=\{m\in\setN^P \mid \exists m'\in\setQ_{\geq 0}^P:\init\dotarrow{*}m'\geq m\}
\end{equation}

Recall that in \cref{sec:stateeq} we introduced the set $I_S$ for denoting the downward-closed invariant derived from the state inequation. The following result\footnote{The statement of Theorem~7 in \cite{DBLP:conf/apn/RecaldeTS99} is wrong since it is based on a too strong definition of limit-reachability. However, the proof becomes correct with our definitions and notations.}
provides a characterization of that invariant when the Petri net satisfies a structural condition.
\begin{theorem}[{\cite[Theorem~7]{DBLP:conf/apn/RecaldeTS99}}]\label{thm:limit}
  If every transition is fireable from the downward-closed invariant $I_Z$ introduced in \cref{sec:mark}, we have:
  \begin{equation}\label{eq:MSbis}
    I_S=\{m\in\setN^P \mid \exists m'\in\setQ_{\geq 0}^P:\init\dotarrow{\infty}m'\geq m\}
  \end{equation}
\end{theorem}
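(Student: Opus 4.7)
The theorem asserts equality of $I_S$ with the set of $m \in \setN^P$ for which some $m' \in \setQ_{\geq 0}^P$ satisfies $\init \dotarrow{\infty} m' \geq m$. The plan is to prove the two inclusions separately: the direction ``$\supseteq$'' holds unconditionally from closedness of the state-equation polyhedron, while ``$\subseteq$'' is the substantive direction and requires the structural hypothesis on $I_Z$.

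For ``$\supseteq$'', I consider the rational analogue $J = \{m'' \in \setQ_{\geq 0}^P \mid \exists \lambda \in \setQ_{\geq 0}^T : \init + \sum_{t} \lambda(t) \Delta(t) \geq m''\}$ of $I_S$. Given a sequence $(m_k)_{k \geq 0}$ converging to $m'$ with $\init \dotarrow{*} m_k$ for every $k$, the state equation for continuous Petri nets supplies $\lambda_k \in \setQ_{\geq 0}^T$ such that $\init + \sum_{t} \lambda_k(t) \Delta(t) = m_k$, so each $m_k$ lies in $J$. Since $J$ is a closed, downward-closed rational polyhedron, its limit $m'$ lies in $J$, and hence so does every $m \leq m'$, yielding $m \in J \cap \setN^P = I_S$.

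For ``$\subseteq$'', given $m \in I_S$ with a witness $\lambda \in \setQ_{\geq 0}^T$, set $m' := \init + \sum_{t} \lambda(t) \Delta(t)$, so $m' \geq m \geq 0$ and $m' \in \setQ_{\geq 0}^P$. For every integer $k \geq 1$, the plan is to exhibit a continuous firing sequence $\sigma_k$ from $\init$ whose firing vector is $\lambda + \epsilon_k$ for some $\epsilon_k \in \setQ_{\geq 0}^T$ with $\epsilon_k \to 0$, so that the terminal marking $\init + \sum_{t} (\lambda(t) + \epsilon_k(t)) \Delta(t)$ converges to $m'$. The hypothesis that every transition is fireable from $I_Z$ is equivalent to $F(p, t) = 0$ for every $p \in Z$ and every $t \in T$, so preset tokens are only needed in places of $A := P \setminus Z$. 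The sign-analysis fixpoint $Q_0 \subseteq Q_1 \subseteq \cdots \subseteq Q = A$ supplies an order in which these places can be primed from $\init$: for each $j$ and each $q \in Q_{j+1} \setminus Q_j$, some transition $t$ witnesses $q \in \prop{t}{Q_j}$ and can be fired for an arbitrarily small amount from any continuous marking with positive mass throughout $Q_j$. I build $\sigma_k$ as a priming phase depositing $\Theta(1/k)$ tokens in every place of $A$ along this order, followed by a bulk phase firing each $t$ for its $\lambda(t)$-amount, split if necessary into rounds so that the primed reserve acts as a buffer.

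The main obstacle is scheduling the micro-firings in $\sigma_k$ so that every intermediate continuous marking has enough tokens in every preset place of the next transition to fire. The structural hypothesis together with the sign-analysis ordering ensures that such a schedule exists, but verifying it requires tracking fractional token counts along $\sigma_k$ and showing that, for $k$ large enough, each preset place receives its reserve strictly before it is drained. This matches the strategy of the original proof in~\cite{DBLP:conf/apn/RecaldeTS99}, modulo the footnote's correction that limit-reachability, rather than strict continuous reachability, is the appropriate notion for this equivalence.
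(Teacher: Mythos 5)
The paper does not actually prove this theorem: it is imported from \cite{DBLP:conf/apn/RecaldeTS99} with only a footnote correcting the definition of limit-reachability, so there is no in-paper argument to compare against and your proposal must stand on its own. Your ``$\supseteq$'' direction does: every continuously reachable marking satisfies the rational state inequation, the projected polyhedron $J$ is closed, so the limit $m'$ and everything below it lies in $I_S$. That half is complete and correct (and, as you note, needs no hypothesis on $I_Z$). Your identification of the hypothesis with ``$F(p,t)=0$ for all $p\in Z$, $t\in T$'' and of $Q=P\moins Z$ as the set of places primable from $\init$ via the $\prop{t}{\cdot}$ fixpoint order is also right.

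The gap is in the ``$\subseteq$'' direction, and it sits exactly where you flag it: you assert that a valid schedule for $\sigma_k$ exists but do not construct one, and the specific recipe you sketch --- deposit $\Theta(1/k)$ tokens in every place of $Q$, then fire the \emph{full} vector $\lambda$ in rounds --- can fail. Take a place $p$ with $\init(p)>0$ and $m'(p)=0$ that the priming phase drains by some $\epsilon>0$: the terminal marking of your bulk phase is $\init+\sum_t\mu_k(t)\Delta(t)+\sum_t\lambda(t)\Delta(t)$, whose $p$-component is $m'(p)+(\sum_t\mu_k(t)\Delta(t))(p)<0$, so the sequence is not fireable no matter how finely you slice the rounds. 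The standard repair (this is how Fraca and Haddad handle lim-reachability) is to fire only $(1-\delta)\lambda$ after reaching a primed marking $w=\init+\sum_t\mu(t)\Delta(t)$ that is strictly positive on $Q$: the marking after round $j$ of $N$ is then $\delta w+(1-\delta)\bigl((1-j/N)\init+(j/N)m'\bigr)$, a nonnegative combination of nonnegative markings with a uniform $\delta$-reserve on $Q$ dominating the $O(1/N)$ within-round dips, and the reached markings $\delta w+(1-\delta)m'$ converge to $m'$ as $\delta\to 0$, which is all that $\init\dotarrow{\infty}m'$ requires. This in turn uses a scaling/superposition lemma for continuous firing sequences ($m\dotarrow{*}m'$ implies $\delta m+v\dotarrow{*}\delta m'+v$ for $v\geq 0$) that your sketch never states. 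Until you supply that bookkeeping, the substantive inclusion is not proved.
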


The two equalities \cref{eq:MCdef} and \cref{eq:MSbis} show that $I_S$ and $I_C$ are very similar for Petri nets satisfying the structural condition stated in \cref{thm:limit}. This condition will be fulfilled by the Petri nets produced by the pre-processing algorithm introduced in \cref{rem:prepro}. Notice that even if the membership problem in $I_S$ and $I_C$ are both decidable in polynomial time, the extra computational cost for deciding the membership problem for the invariant $I_C$, even for efficient SMT solvers like \texttt{Z3}, is not neglectable. Naturally, if a marking is in $I_C$ then it is also in $I_S$, and the converse property is false in general as shown by \cref{ex:CS}. However, in practice, we observed that configurations that are in $I_S$ are very often also in $I_C$ (see \cref{fig:exp}(d)), as already mentioned in \cref{sec:bench}.

\section{Conclusion}

Petri nets have recently been used as low-level models for model-checking
concurrent systems written in high-level programming
languages~\cite{dkkw2011-cav,DOsualdoKO13}.
The original verification question on the concurrent program reduces to
a coverability question on the resulting Petri net.
We have proposed in this paper a family of simple coverability algorithms
parametrized by downward-closed invariants.
As future work,
we intend to look for classes of downward-closed invariants
with a good tradeoff between precision and efficient membership.

% Recommended bibstyle for LNCS.
\bibliographystyle{splncs03}
\bibliography{reference}

\clearpage
\appendix

\section{Proofs for \cref{sec:backward-pruning}}
\label{appendix:sec:backward-pruning}
\lemukstationary*
\begin{proof}
  The partial order $\leq$ on $\pset$ is a well-quasi-order by Dickson's Lemma.
  Therefore,
  every growing sequence of upward-closed sets is ultimately stationary
  (see, e.g., \cite[Lemma~2.4]{DBLP:journals/tcs/FinkelS01}).
  The lemma follows from the observation that
  each $U_k$ is upward-closed and that
  the sequence $\Uk$ is growing.
  \qed
\end{proof}

\lemukcov*
\begin{proof}
  If $\final \in \Cov$ then $\init \transitionstar m \geq \final$
  for some marking $m$ in $\pset$.
  Since $\init \transitionstar m$,
  there exists $m_0, \ldots, m_n \in \pset$
  such that $\init = m_n$,
  $m_n \transitionwithoutarg m_{n-1} \cdots \transitionwithoutarg m_0$
  and $m_0 \geq \final$.
  First observe that $m_i \in I$ for every $i \in \{0, \ldots, m\}$
  because $I$ is an invariant for $\petri$.
  Moreover, $\final \in I$ since
  $I$ is downward-closed, $m_0 \geq \final$ and $m_0 \in I$.
  We prove, by induction on $i$, that $m_i \in U_i$ for all $i \in \{0, \ldots, n\}$.
  The basis $m_0 \in U_0$ follows from the facts that
  $m_0 \geq \final$ and $\final \in I$.
  For the induction step,
  let $i \in \{0, \ldots, n-1\}$ and assume that $m_i \in U_i$.
  Recall that $m_{i+1} \in I$ and $m_{i+1} \transitionwithoutarg m_i$.
  It follows that $m_{i+1} \in (\pre{U_i} \cap I) \subseteq U_{i+1}$.
  We have thus shown that $m_n \in U_n$, hence,
  $\init = m_n$ belongs to $\bigcup_k U_k$.

  \smallskip

  Let us show the converse of the lemma.
  We first prove, by induction on $k$,
  that $U_k \subseteq \prestarfinal$ for every $k \in \setN$.
  The basis follows from the observation that
  $U_0 \subseteq \upa{\final} \subseteq \prestarfinal$.
  For the induction step,
  let $k \in \setN$ and assume that $U_k \subseteq \prestarfinal$.
  Recall that
  $U_{k+1} = \upa{(\pre{U_k} \cap I)} \cup U_k$, hence,
  $U_{k+1} \subseteq \upa{\pre{U_k}} \cup U_k$.
  Since $\petri$ is a Petri net,
  $\pre{S}$ is upward-closed for every upward-closed subset $S \subseteq \pset$.
  It follows that
  $\upa{\pre{U_k}} = \pre{U_k}$.
  We derive from the induction hypothesis that
  $U_{k+1} \subseteq \pre{U_k} \cup U_{k} \subseteq \prestarfinal$.
  We have thus shown that $U_k \subseteq \prestarfinal$ for every $k \in \setN$.
  The observation that
  $\final \in \Cov \Leftrightarrow \init \in \prestarfinal$
  concludes the proof of the lemma.
  \qed
\end{proof}

\section{Proofs for \cref{sec:icover-algorithm}}
\label{appendix:sec:icover-algorithm}
\lemwqomin*
\begin{proof}
  The partial order $\leq$ on $\pset$ is a well-quasi-order by Dickson's Lemma.
  Therefore,
  the set $\Min S$ of minimal elements of $S$ is necessarily finite.
  Moreover,
  $S \subseteq \upa{\Min S}$ since $\leq$ is well-founded.
  It follows that $\upa{S} = \upa{\Min S}$.
  \qed
\end{proof}

\lemcpre*
\begin{proof}
  Let $u \in \pre[^t]{\upa{m}}$.
  There exists $v \geq m$ such that $u \transition{t} v$.
  Consider a place $p \in P$.
  It holds that $u(p) \geq F(p, t)$ and $v(p) = u(p) - F(p, t) + F(t, p)$
  since $u \transition{t} v$.
  We consider two cases.
  \begin{itemize}
  \item
    If $m(p) \leq F(p, t)$ then $\cpre[^t]{m}(p) = F(p, t) \leq u(p)$.
  \item
    If $m(p) \geq F(p, t)$ then $\cpre[^t]{m}(p) = F(p, t) + m(p) - F(t, p)$.
    Since $v \geq m$, we get that
    $\cpre[^t]{m}(p) \leq F(p, t) + v(p) - F(t, p) = u(p)$.
  \end{itemize}
  In both cases, we obtain that $\cpre[^t]{m}(p) \leq u(p)$.
  We have thus shown that $u \in \upa{\cpre[^t]{m}}$.

  \smallskip

  Conversely,
  let $u \in \upa{\cpre[^t]{m}}$.
  This means that $u(p) \geq \cpre[^t]{m}(p)$ for every place $p \in P$.
  Therefore,
  $u(p) \geq F(p, t)$ and $u(p) \geq F(p, t) + m(p) - F(t, p)$.
  It follows that $u \transition{t} v$ for the marking $v \geq m$
  defined by $v(p) = u(p) - F(p, t) + F(t, p)$.
  We have thus shown that $u \in \pre[^t]{\upa{m}}$.
  \qed
\end{proof}

\lemukbk*
\begin{proof}
  It is readily seen that $B_k$ and $P_k$ are finite subsets of $\pset$ for every $k$.
  We first observe that, for every $k$ with $0 \leq k < \ell_P$,
  $$
  \begin{array}{@{\hspace{18.5mm}}r@{\ \ }c@{\ \ }ll}
    \upa{P_k}
    & = & \upa{\left((\{\cpre[^t]{m} \mid t \in T, m \in B_k\} \setminus \upa{B_k}) \cap I\right)}
    & \hspace{7.5mm} [\text{Lines~\ref{line:assign:N}--\ref{line:assign:P}}]\\
    & = & \upa{\left(\{\cpre[^t]{m} \mid t \in T, m \in B_k\} \cap (I \setminus \upa{B_k})\right)}\\
    & = & \bigcup_{t \in T} \upa{\left(\cpre[^t]{B_k} \cap (I \setminus \upa{B_k})\right)}\\
    & = & \bigcup_{t \in T} \upa{\left(\pre[^t]{\upa{B_k}} \cap (I \setminus \upa{B_k})\right)}
    & \hspace{5mm} [\text{\cref{lem:cpre-inv}}]\\
    & = & \upa{\left(\pre{\upa{B_k}} \cap (I \setminus \upa{B_k})\right)}\\
    & = & \upa{\left((\pre{\upa{B_k}} \cap I) \setminus \upa{B_k}\right)}
  \end{array}
  $$
  Let us now prove, by induction on $k$, that $U_k = \upa{B_k}$
  for every $k$ with $0 \leq k < \ell_B$.
  The basis $U_0 = \upa{B_0}$ follows from
  lines $1$--\ref{line:while-start} of $\mathtt{ICover}$ and from the definition of $U_0$.
  For the induction step,
  let $k \in \setN$ with $k+1 < \ell_B$, and assume that $U_k = \upa{B_k}$.
  Line~\ref{line:while-end} entails that $B_{k+1} = \Min (B_k \cup P_k)$.
  It follows that
  $$
  \begin{array}{@{\hspace{24.5mm}}r@{\ \ }c@{\ \ }ll}
    \upa{B_{k+1}}
    & = & \upa{B_k} \cup \upa{P_k}
    & \hspace{11.5mm} [\text{\cref{lem:wqo-min}}]\\
    & = & \upa{B_k} \cup \upa{\left((\pre{\upa{B_k}} \cap I) \setminus \upa{B_k}\right)}\\
    & = & U_k \cup \upa{\left((\pre{U_k} \cap I) \setminus U_k\right)}
    & \hspace{13.5mm} [U_k = \upa{B_k}]\\
    & = & U_k \cup \upa{(\pre{U_k} \cap I)}\\
    & = & U_{k+1}
  \end{array}
  $$
  This concludes the proof that
  $U_k = \upa{B_k}$ for every $k$ with $0 \leq k < \ell_B$.
  Moreover,
  coming back to the characterization of $\upa{P_k}$,
  we get that
  $$
  \begin{array}{r@{\ \ }c@{\ \ }l}
    \upa{P_k}
    & = & \upa{\left((\pre{\upa{B_k}} \cap I) \setminus \upa{B_k}\right)}\\
    & = & \upa{\left((\pre{U_k} \cap I) \setminus U_k\right)}\\
    & = & \upa{(U_{k+1} \setminus U_k)}
  \end{array}
  $$
  for every $k$ with $0 \leq k < \ell_P$.
  \qed
\end{proof}

\thmicover*
\begin{proof}
  Let us first prove termination.
  We need to show that the unique maximal execution of $\mathtt{ICover}(\petri, \final, I)$
  is finite.
  By contradition, assume that $\ell_B = \infty$.
  According to \cref{lem:uk-stationary},
  there exists an index $h \in \setN$ such that $U_h = U_{h+1}$.
  We derive from \cref{lem:uk-bk} that $P_h = \emptyset$.
  Therefore,
  the execution should terminate at line~\ref{line:return:false} during
  the $(h+1)^\mathrm{th}$ iteration of the $\mathbf{while}$ loop.
  This contradicts our assumption that $\ell_B = \infty$.

  \smallskip

  We now turn our attention to the correctness of $\mathtt{ICover}$.
  As it is finite,
  the unique maximal execution of $\mathtt{ICover}(\petri, \final, I)$
  either returns $\mathtt{False}$ at line~\ref{line:return:false} or
  returns $\mathtt{True}$ at line~\ref{line:return:true}.
  \begin{itemize}
  \item
    If it returns $\mathtt{False}$ then $P_{\ell_P - 1} = \emptyset$
    and it follows from \cref{lem:uk-bk} that
    $U_{\ell_P} \subseteq U_{\ell_P-1}$.
    We get from the definition of $\Uk$ that
    $U_k = U_{\ell_P-1}$ for every $k \geq \ell_P$.
    Therefore, $U_{\ell_P-1} = \bigcup_k U_k$.
    Moreover,
    $\init \not\in \upa{B_{\ell_P-1}}$ because the condition of the
    $\mathbf{while}$ loop had to hold.
    It follows from \cref{lem:uk-bk} that $\init \not\in U_{\ell_P-1}$.
    We derive from \cref{lem:uk-cov} that $\final \not\in \Cov$.
  \item
    If it returns $\mathtt{True}$ then $\init \in \upa{B_{\ell_B - 1}}$
    and it follows from \cref{lem:uk-bk} that $\init \in U_{\ell_B - 1}$.
    We derive from \cref{lem:uk-cov} that $\final \in \Cov$.
  \end{itemize}
  \qed
\end{proof}

\section{Proofs for \cref{sec:sign-analysis}}
\label{appendix:sec:sign-analysis}
\lemprop*
\begin{proof}
  We can assume without loss of generality that $\bigwedge_{p\in P\moins Q}F(p,t)=0$ since otherwise the set $\prop{t}{Q}$ is empty. For the same reason, we can assume that there exists $q\in P$ such that $F(t,q)>0$. Let us prove that such a place $q$ cannot be in $Z$. We introduce the markings $m_t$ and $m_t'$ defined by $m_t(p)=F(p,t)$ and $m'_t(p)=F(t,p)$ for every $p\in P$. Those markings are the minimal ones satisfying $m_t\xrightarrow{t}m_t'$. Observe that for every $p\in Z$, we have $p\in P\moins Q$ since $Q\cap Z$ is empty. It follows that $F(p,t)=0$ for every $p\in Z$. Hence $m_t$ is in the inductive invariant $I_Z$. Since $m_t\xrightarrow{t}m_t'$, we deduce that $m_t'\in I_Z$. As $F(t,q)>0$, we get $m'_t(q)>0$. Hence $q\not\in Z$. 
  \qed
\end{proof}

\lemzpmoinsq*
\begin{proof}
  Since $Q_0\subseteq P\moins Z$, \cref{lem:prop} shows by induction that $Q_k\subseteq P\moins Z$ for every $k$. It follows that $Q\subseteq P\moins Z$. We derive $Z\subseteq P\moins Q$. The converse inclusion is obtained by proving that  the set $M=\{m\in\setN^P \mid \bigwedge_{p\in P\moins Q}m(p)=0\}$ is an inductive invariant. First of all, since $Q_0\subseteq Q$, we deduce that $\init\in M$. Now let us consider $m\in M$ and a transition $t$ such that $m\xrightarrow{t}m'$ for some marking $m'$. Observe that $m(p)\geq F(p,t)$ for every $p\in P$. In particular, for $p\in P\moins Q$, the equality $m(p)=0$ implies $F(p,t)=0$. Assume by contradiction that $m'\not\in M$. In that case, there exists $q\in P\moins Q$ such that $m'(q)>0$. Since $m'(q)=m(q)+F(t,q)+F(q,t)$ and $m(q)=0=F(q,t)$, we deduce that $F(t,q)>0$. Thus $q\in\prop{t}{Q}$. By definition of $Q$, we get $q\in Q$ and we obtain a contradiction. Thus $M$ is an inductive invariant. By maximality of $Z$, we get the inclusion $P\moins Q\subseteq Z$. Thus $Z=P\moins Q$.
  \qed
\end{proof}

\end{document}